\newtheorem{theorem}{\textbf{Theorem}}
\newtheorem{lemma}{\textbf{Lemma}}
\newtheorem{corollary}{\textbf{Corollary}}
\def\ScaleIfNeeded{%
\ifdim\Gin@nat@width>\linewidth \linewidth \else \Gin@nat@width
\fi } \makeatother
\begin{document}
%

\title{\Huge{Safeguarding Massive MIMO Aided HetNets Using Physical Layer Security}}

\author{
\IEEEauthorblockN{ Yansha Deng\IEEEauthorrefmark{1}, Lifeng Wang\IEEEauthorrefmark{2},  Kai-Kit Wong\IEEEauthorrefmark{2}, Arumugam Nallanathan\IEEEauthorrefmark{1},  Maged Elkashlan\IEEEauthorrefmark{3}, \\ and Sangarapillai Lambotharan \IEEEauthorrefmark{4}  }


\IEEEauthorblockA{
\IEEEauthorrefmark{1}Department of Informatics, King's College London, London, UK\\
\IEEEauthorrefmark{2}Department of Electronic and
Electrical Engineering, University College London, London, UK\\
\IEEEauthorrefmark{3}School of Electronic Engineering and Computer
Science, Queen Mary University of London, London, UK\\
\IEEEauthorrefmark{4} Department of Electronic and Electrical Engineering, Loughborough University,
Leicestershire, UK
 }

 }

\maketitle

\begin{abstract}
This paper exploits the potential of physical layer security in massive multiple-input multiple-output (MIMO) aided  two-tier heterogeneous networks (HetNets). We focus on the downlink secure transmission in the presence of multiple eavesdroppers.  We first address the impact of massive MIMO on the maximum receive power based user association. We then derive the tractable upper bound  expressions for the secrecy outage probability of a  HetNets user. We show that the implementation of massive MIMO   significantly improves the secrecy performance, which indicates that physical layer security could be a promising solution for safeguarding massive MIMO HetNets. Furthermore, we show that the secrecy outage probability of  HetNets user first degrades and then improves with increasing the  density of PBSs.
\end{abstract}

\section{Introduction}
Security and privacy in 5G networks is of paramount importance~\cite{5G_vision, Lifeng_commag}. Physical layer security has recently attracted much attention as a potential security solution at the physical layer~\cite{Wyner}. Such security technique exploits propagation randomness to establish secret and avoids using ciphering keys. The FP7 Europe research project PHYLAWS~\cite{PHYLAWS} focuses on the realistic implantation of physical layer security in the existing and future wireless networks. 

Research efforts on the physical layer security have been made by considering different aspects, such as antenna selection~\cite{L_TWC_2015}, cooperative jamming~\cite{GanZheng2011}, and artificial noise~\cite{Yansha_ICC_2015}, etc.  In \cite{Zhu2014}, matched filter precoding and artificial noise generation was designed to secure downlink transmission in a multicell massive multiple-input multiple-output (MIMO) system in the presence of an eavesdropper. In \cite{shaoshi_yan},  physical layer security has been investigated in a two-tier downlink HetNets, where the cooperative femtocells help macrocell achieve the optimal secrecy transmit beamforming.


In 5G, massive multiple-input multiple-output (MIMO) and heterogeneous networks (HetNets) are two key enablers. Physical layer security in massive MIMO enabled HetNets has not been conducted yet and is in its infancy. We believe that massive MIMO enabled HetNets is a new highly rewarding candidate for physical layer security due to the following factors:
\begin{itemize}
  \item \textbf{Base station densities}. In HetNets, different tiers have different base station densities, and small cells are deployed in a large scale to improve the spectrum efficiency. As such, the distance between the user and its serving base station is shorter, which in turn decreases the risk of information leakage.

  \item \textbf{Large antenna arrays}. Base station with large antenna array provides large array gain for its legitimate user. As such, the transmit power level can be cut, and the received signal power at the eavesdropper is correspondingly reduced, due to the fact that the eavesdropper cannot obtain the array gain.

  \item \textbf{Time division duplex}. Massive MIMO is recommended to be applied in time division duplex (TDD) system, to save the pilot resources. In the TDD mode, base station estimates the uplink channel via uplink pilot signals from user, and obtains the downlink channel state information (CSI) based on the channel reciprocity, which means that there is no channel training in the downlink. As such, eavesdropper cannot easily estimate the eavesdropper's channel during the downlink transmission.

\end{itemize}

Motivated by the above, this paper considers physical layer security in the downlink $K$-tier HetNets with massive MIMO, which to the best of our knowledge, has not been studied yet. Each macrocell base station (MBS) is equipped with large antenna arrays and uses linear zero-forcing beamforming (ZFBF) to communicate with dozens of single antenna users over the same time and frequency band. Each picocell base station (PBS) equipped with a single antenna serves one single antenna user for each transmission. We adopt a stochastic geometry approach
to model the different tiers, where the locations of MBSs, PBSs and eavesdroppers are modelled following independent homogeneous Poisson point processes. We first address the impact of massive MIMO on the maximum receive power based user association.  We then derive the upper bound for the secrecy outage probability of a HetNets  user, to show the benefits of  massive MIMO. Our results confirms that using massive MIMO  can significantly enhance the secrecy outage probability of the macrocell user. Furthermore,  the secrecy outage probability of the HetNets  user first increases and then decreases with  increasing the density of  PBSs.
\section{System Model}
In the TDD two-tier HetNets consisting of  macrocells
and  picocells, downlink transmission is considered in the presence of multiple eavesdroppers. Without loss of generality, we assume that the first tier represents the class of MBSs.
The MBSs are located following a homogeneous Poisson point process (HPPP) $\Phi_\mathrm{M}$ with density $\lambda_\mathrm{M}$,
while the PBSs  are located following an independent HPPP $\Phi_\mathrm{P}$ with
density $\lambda_\mathrm{P}$. The eavesdroppers are located following an independent HPPP $\Phi_\mathrm{E}$ with density $\lambda_\mathrm{E}$.

Massive MIMO is adopted in the macrocells~\cite{Jungnickel_IEEE_Commag}, where each $N$-antenna MBS simultaneously communicates with $S$ users $\left(N \gg S \geq 1\right)$, while each PBS and user are single-antenna
nodes. Each MBS uses ZFBF  to transmit $S$ data streams with equal power assignment, such that users that act as
potential malicious eavesdropper can only receive its information signals.
We consider the perfect downlink CSI  and the universal frequency reuse that
all the tiers share the same bandwidth. All the channels undergo independent and identically distributed (i.i.d.) quasi-static Rayleigh fading.

\subsection{User Association}
We consider  user association based on the maximum received power, where a user is associated with the BS that
provides the maximum average received power. The average received power at a user that is connected
with the MBS $\ell$ ($\ell \in \Phi_\mathrm{M}$) is expressed as
\begin{align}\label{Macro_Receive_Power}
{P_{r,\mathrm{M}}} = G_a \frac{P_\mathrm{M}}{S}L\left(\left|X_{\ell,\mathrm{M}}\right|\right),
\end{align}
where $G_a$ is the array gain, $P_\mathrm{M}$ is the MBS's transmit power,
$L\left(\left|X_{\ell,\mathrm{M}}\right|\right)=\beta{ {{\left|X_{\ell,\mathrm{M}}\right|}}^{ - {\alpha_\mathrm{M}}}}$ is
the path loss function, $\beta$ is the frequency dependent constant value, $\left|X_{\ell,\mathrm{M}}\right|$ is the distance,
and $\alpha_{1}$ is the path loss exponent. The array gain $G_a$ of ZFBF transmission
is  ${{N - S + 1}}$~\cite{Hosseini2014_Massive}.

In the picocell, the long-term average received power at a user that is connected with the
PBS $j$ ($j \in \Phi_\mathrm{P}$)  is expressed as
\begin{align}\label{Small_Receive_Power}
{P_{r,\mathrm{P}}} ={P_\mathrm{P}}L\left(\left|X_{j,\mathrm{P}}\right|\right),
\end{align}
where ${P_\mathrm{P}}$ is the PBS's transmit power and $L\left(\left|X_{j,P}\right|\right)=
\beta{\left( {{\left|X_{j,\mathrm{P}}\right|}} \right)^{ - {\alpha_2}}}$ with distance $\left|X_{j,\mathrm{P}}\right|$
and path loss exponent $\alpha_2$.

\subsection{Channel Model}
All the channels undergo the independent and identically distributed (i.i.d.) quasi-static Rayleigh fading. We assume that a typical user is located at the origin $o$. The receive signal-to-interference-plus-noise ratio (SINR) of a typical user at a random distance $\left|X_{o,{\mathrm{M}}}\right|$ from its associated MBS is given by
\begin{align}\label{SINR_Macro}
\mathrm{SINR}_\mathrm{M}= \frac{{\frac{{P_{\mathrm{M}}}}{S} {h_{o,{\mathrm{M}}}} L\left( {\left|{X_{o,{\mathrm{M}}}}\right|} \right)}}{{I_1+ {\delta ^2}}},
\end{align}
where $I_1=I_{\mathrm{M},1}+I_{\mathrm{S},1}$,
$I_{\mathrm{S},1}= {\sum\nolimits_{j \in {\Phi_\mathrm{P}}} {{P_\mathrm{P}}{h_{j,\mathrm{P}}}L\left( {{\left|X_{j,\mathrm{P}}\right|}} \right)} }$,
$I_{\mathrm{M},1} =\sum\nolimits_{\ell  \in {\Phi_\mathrm{M}}\backslash B_{o,\mathrm{M}}} {\frac{P_\mathrm{M}}{S}
{h_{\ell,\mathrm{M}}}L\left( {\left|X_{\ell,\mathrm{M}}\right|} \right)} $,
${h_{o,{\mathrm{M}}}}\sim \Gamma\left(N-S+1,1\right)$ is the small-scale fading channel power gain between the typical user and its associated MBS~\cite{Hosseini2014_Massive},
 $h_{j,\mathrm{P}}\sim \rm{exp}(1)$ and $\left|X_{j,\mathrm{P}}\right|$ are the small-scale fading interfering channel power gain and
distance between the typical user and BS $j$ in the picocell, respectively,  ${h_{\ell,\mathrm{M}}}\sim \Gamma\left(S,1\right)$ and $\left|X_{\ell,\mathrm{M}}\right|$ are the equivalent small-scale fading interfering channel power gain
 and distance between the typical user and MBS $\ell\in {\Phi _\mathrm{M}}\backslash {B_{o,\mathrm{M}}}$ (except the serving BS ${B_{o,\mathrm{M}}}$),
respectively,
 and $\delta ^2$ is the noise power.

The SINR of a typical user at a random distance $\left|X_{o,\mathrm{P}}\right|$ from its associated PBS $B_{o,\mathrm{P}}$ 
 is given by
\begin{align}\label{SINR_Small}
\mathrm{SINR}_\mathrm{P}= \frac{{{P_\mathrm{P}}{g_{o,\mathrm{P}}}L\left( {\left| {{X_{o,\mathrm{P}}}} \right|} \right)}}{{I_2 + {\delta ^2}}},
\end{align}
where $I_2=I_{\mathrm{M},2}+I_{\mathrm{S},2}$, $I_{\mathrm{M},2} = \sum\nolimits_{\ell  \in {\Phi_\mathrm{M}}}
{\frac{P_\mathrm{M}}{S}{g_{\ell,\mathrm{M}}}L\left( {\left| {{X_{\ell,\mathrm{M}}}} \right|} \right)} $,
$I_{\mathrm{S},2} =  {\sum\nolimits_{j \in {\Phi _\mathrm{P}}\backslash {B_{o,\mathrm{P}}}} {{P_\mathrm{P}}{g_{j,\mathrm{P}}}
L\left( {\left| {{X_{j,\mathrm{P}}}} \right|} \right)} }$, $g_{o,\mathrm{P}}\sim \mathrm{exp}(1)$ is the small-scale fading channel power gain between the
typical user and its serving BS, $g_{\ell,\mathrm{M}}\sim \Gamma\left(S,1\right)$ and
$\left| {{X_{\ell,\mathrm{M}}}} \right|$ are the equivalent small-scale fading interfering channel power gain
and distance between the typical user and MBS $\ell$, respectively,
and  $g_{j,\mathrm{P}}$ and $\left| {{X_{j,\mathrm{P}}}} \right|$ are the small-scale fading interfering channel power gain and distance
between the typical user and BS $j\in {\Phi _\mathrm{P}}\backslash {B_{o,\mathrm{P}}}$, respectively, and $g_{j,\mathrm{P}}\sim \mathrm{exp}(1)$.

We consider the non-colluding and passive eavesdropping that each eavesdropper intercepts the signal independently without any attacks. In this case, we only need to focus on the most malicious eavesdropper that has the largest receive SINR. When the MBS transmits the information messages to its intended user, the receive SINR at the most malicious eavesdropper is given by
\begin{align}\label{MBS_Eve_SINR}
{\mathrm{SINR}_{{e^{\rm{*}}}}^\mathrm{M}} = \mathop {\max }\limits_{e \in {\Phi_\mathrm{E}}} \left\{ {\frac{{{\frac{P_\mathrm{M}}{S}}{h_{o,e}}L\left( {\left| {{X_{o,e}}} \right|} \right)}}{{{I_A} + {I_{\mathrm{M},e}} + {I_{\mathrm{S},e}}+\delta^2}}} \right\},
\end{align}
where $h_{o,e} \sim \exp(1)$ and $\left|X_{o,e}\right|$ are the equivalent small-scale fading channel power gain and distance between the eavesdropper and its targeted BS, respectively, $I_A=\frac{P_\mathrm{M}}{S} h_e L\left( {\left| {{X_{o,e}}} \right|} \right)$ with $h_e \sim \Gamma\left(S-1,1\right)$ is the intra-cell interference in the macro cell, ${I_{\mathrm{M},e}}=\sum\nolimits_{\ell  \in {\Phi_\mathrm{M}\setminus o}}
{\frac{P_\mathrm{M}}{S}{h_{\ell,e}}L\left( {\left| {{X_{\ell,e}}} \right|} \right)}$, $h_{\ell,e}\sim \Gamma\left(S,1\right)$ and
$\left| {{X_{\ell,e}}} \right|$ are the equivalent small-scale fading interfering channel power gain
and distance between the eavesdropper and MBS $\ell$, respectively, ${I_{\mathrm{S},e}}=  {\sum\nolimits_{j \in {\Phi_\mathrm{P}}} {{P_\mathrm{P}}{h_{j,\mathrm{P},e}}L\left( {{\left|X_{j,\mathrm{P},e}\right|}} \right)} }$,  $h_{j,\mathrm{P},e}\sim \rm{exp}(1)$ and $\left|X_{j,\mathrm{P},e}\right|$ are the small-scale fading interfering channel power gain and
distance between the eavesdropper and BS $j$ in the picocell, respectively. Similarly. when the PBS transmits the information messages to its intended user,  the receive SINR at the most malicious eavesdropper is given by
\begin{align}\label{SBS_Eve_SINR}
{\mathrm{SINR}_{{e^{\rm{*}}}}^\mathrm{P}} = \mathop {\max }\limits_{e \in {\Phi_\mathrm{E}}} \left\{ {\frac{{{{P_\mathrm{P}}}{g_{o,e}}L\left( {\left| {{X_{o,e}}} \right|} \right)}}{{{I_{\mathrm{M},\mathrm{P},e}} + {I_{\mathrm{S},\mathrm{P},e}}+\delta^2}}} \right\},
\end{align}
where $g_{o,e} \sim \exp(1)$ and $\left|X_{o,e}\right|$ are the equivalent small-scale fading channel power gain and distance between the eavesdropper and its targeted BS, respectively,  ${I_{\mathrm{M},\mathrm{P},e}}=\sum\nolimits_{\ell  \in {\Phi_\mathrm{M}}}
{\frac{P_\mathrm{M}}{S}{g_{\ell,e}}L\left( {\left| {{X_{\ell,e}}} \right|} \right)}$, $g_{\ell,e}\sim \Gamma\left(S,1\right)$ and
$\left| {{X_{\ell,e}}} \right|$ are the equivalent small-scale fading interfering channel power gain
and distance between the eavesdropper and MBS $\ell$, respectively, ${I_{\mathrm{S},\mathrm{P},e}}=  {\sum\nolimits_{j \in {\Phi_\mathrm{P}}\setminus o} {{P_\mathrm{P}}{g_{j,\mathrm{P},e}}L\left( {{\left|X_{j,\mathrm{P},e}\right|}} \right)} }$, $g_{j,\mathrm{P},e}\sim \rm{exp}(1)$ and $\left|X_{j,\mathrm{P},e}\right|$ are the small-scale fading interfering channel power gain and
distance between the eavesdropper and BS $j$ in the picocell, respectively.

\section{Secrecy Performance}

In an effort to assess  the secrecy outage probability of a HetNets user,  we first characterize the impact of massive MIMO on the cell association probability.

\subsection{User Association Probability}
We first derive the PDF of the distance between a typical user and its serving base station in the following two lemmas.
\begin{lemma}\label{Lemma1}
\emph{The PDF of the distance $\left|{X_{o,\mathrm{M}}}\right|$ between a typical user and its
serving MBS $B_{o,\mathrm{M}}$ is given by}
\begin{align}\label{PDF_MBS_distance}
{f_{\left|{X_{o{\rm{,}}\mathrm{M}}}\right|}}\left( x \right) =& \frac{{2\pi {\lambda _\mathrm{M}}}}
{{{\mathcal{A}_\mathrm{M}}}}x\exp \bigg\{  - \pi {\lambda _\mathrm{M}}{x^2}\nonumber\\
&- \pi
{  {\lambda _\mathrm{P}}{{\left( {\frac{{S{P_\mathrm{P}}}}{{\left( {N - S + 1} \right){P_\mathrm{M}}}}} \right)}^{{2 \mathord{\left/
 {\vphantom {2 {{\alpha _2}}}} \right.
 \kern-\nulldelimiterspace} {{\alpha _2}}}}}{x^{{{2{\alpha _1}} \mathord{\left/
 {\vphantom {{2{\alpha _1}} {{\alpha _2}}}} \right.
 \kern-\nulldelimiterspace} {{\alpha _2}}}}}}   \bigg\}.
\end{align}
\emph{ In \eqref{PDF_MBS_distance}, ${{\cal A}_\mathrm{M}}$ is
the probability that a typical user is associated with the MBS}
\begin{align}\label{A_M_pro}
\mathcal{A}_\mathrm{M}=&2\pi {\lambda _\mathrm{M}}\int_0^\infty  r\exp \bigg\{   - \pi {\lambda _\mathrm{M}}{r^2}\nonumber\\
&- \pi
{  {\lambda _\mathrm{P}}{{\left( {\frac{{S{P_\mathrm{P}}}}{{\left( {N - S + 1} \right){P_\mathrm{M}}}}} \right)}^{{2 \mathord{\left/
 {\vphantom {2 {{\alpha _2}}}} \right.
 \kern-\nulldelimiterspace} {{\alpha _2}}}}}{r^{{{2{\alpha _1}} \mathord{\left/
 {\vphantom {{2{\alpha _1}} {{\alpha _2}}}} \right.
 \kern-\nulldelimiterspace} {{\alpha _2}}}}}}  \bigg\}dr.
\end{align}

\end{lemma}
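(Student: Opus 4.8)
The plan is to apply the standard homogeneous-PPP void-probability argument for maximum-received-power association, with the one essential twist that the array gain $G_a=N-S+1$ acts as an effective bias toward the macro tier. First I would observe that within each tier the path-loss $L(\cdot)$ is strictly decreasing in distance, so the best candidate in each tier is simply its \emph{nearest} BS. Consequently the user associates with an MBS exactly when the power from the nearest MBS beats the power from the nearest PBS, and the distance that appears is precisely $\left|X_{o,\mathrm{M}}\right|$, the distance to that nearest MBS.

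Next I would write down the two competing received powers. If the nearest MBS is at distance $x$ and the nearest PBS at distance $y$, then by \eqref{Macro_Receive_Power} and \eqref{Small_Receive_Power} the macro and pico powers are $G_a\frac{P_\mathrm{M}}{S}\beta x^{-\alpha_1}$ and $P_\mathrm{P}\beta y^{-\alpha_2}$. Requiring the macro power to dominate and solving for $y$ gives the threshold
\[
y > \left(\frac{S P_\mathrm{P}}{(N-S+1)P_\mathrm{M}}\right)^{1/\alpha_2} x^{\alpha_1/\alpha_2},
\]
whose right-hand side I denote $y_0(x)$. Thus, conditioned on the nearest MBS being at distance $x$, macro association is equivalent to the nearest PBS lying beyond $y_0(x)$, i.e.\ to there being no PBS in the disk of radius $y_0(x)$.

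I would then assemble the joint event using independence of $\Phi_\mathrm{M}$ and $\Phi_\mathrm{P}$. The contact-distance density of $\Phi_\mathrm{M}$ is $2\pi\lambda_\mathrm{M}\,x\,\exp(-\pi\lambda_\mathrm{M}x^2)$, while the HPPP void probability gives $\Pr(\text{no PBS within }y_0(x))=\exp(-\pi\lambda_\mathrm{P}\,y_0(x)^2)$. Substituting $y_0(x)^2=\left(\tfrac{S P_\mathrm{P}}{(N-S+1)P_\mathrm{M}}\right)^{2/\alpha_2}x^{2\alpha_1/\alpha_2}$ and multiplying these two independent factors reproduces exactly the bracketed exponent of \eqref{PDF_MBS_distance}. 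This product is a defective density; integrating it over $x\in(0,\infty)$ yields the macro-association probability $\mathcal{A}_\mathrm{M}$ of \eqref{A_M_pro}, and dividing by $\mathcal{A}_\mathrm{M}$ normalizes it to the stated conditional PDF.

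The only place demanding care, and the step I would watch most closely, is the bookkeeping of the array gain: since the macro link carries gain $G_a=N-S+1$ but the pico link does not, the cross-tier comparison is biased, and it is precisely this bias that surfaces inside the $(\cdot)^{2/\alpha_2}$ factor of the exponent. Everything else is the elementary void probability together with the independence of the two point processes, so no genuine integration is required beyond defining the normalizing constant $\mathcal{A}_\mathrm{M}$.
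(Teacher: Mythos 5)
Your proposal is correct and follows exactly the route the paper intends: the paper gives no explicit proof but defers to the standard maximum-bias-power association argument of its cited reference (Jo \emph{et al.}), which is precisely your derivation --- nearest BS per tier, the cross-tier threshold $y_0(x)$ carrying the array-gain bias $(N-S+1)$ versus $S$, the HPPP void probability, independence of $\Phi_\mathrm{M}$ and $\Phi_\mathrm{P}$, and normalization of the resulting defective density by $\mathcal{A}_\mathrm{M}$. No gaps; the handling of the array gain in the threshold is exactly where the lemma differs from the unbiased case, and you treat it correctly.
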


\begin{lemma}
The PDF of the distance $\left|{X_{o,k}}\right|$ between a typical user and its serving PBS in
 $\mathrm{B}_{o,\mathrm{P}}$ is given by
\begin{align}\label{PDF_SBS_distance}
{f_{\left| {{X_{o,\mathrm{P}}}} \right|}}\left( x \right) = &\frac{{2\pi {\lambda _\mathrm{P}}}}{{{{\cal A}_\mathrm{P}}}}x\exp
\bigg\{ - \pi {\lambda _\mathrm{P}}{{x}^{2}}\nonumber\\
& { - \pi {\lambda _\mathrm{M}}\left( {\frac{{{P_\mathrm{M}}\left( {N - S + 1} \right)}}{{{P_\mathrm{P}}S}}} \right){x^{2{\alpha _2}/{\alpha _1}}}}  \bigg\}.
\end{align}
\emph{Here, ${{\cal A}_\mathrm{P}}$ is
the probability that a typical user is associated with the PBS, which is given by}
\begin{align}\label{A_P_pro}
{{\cal A}_\mathrm{P}} = &2\pi {\lambda _\mathrm{P}}\int_0^\infty  r\exp \bigg\{   - \pi {\lambda _\mathrm{P}}{{r}^{2}}\nonumber\\
&{ - \pi {\lambda _\mathrm{M}}\left( {\frac{{{P_\mathrm{M}}\left( {N - S + 1} \right)}}{{{P_\mathrm{P}}S}}} \right){r^{2{\alpha _2}/{\alpha _1}}}}   \bigg\}dr.
\end{align}
\end{lemma}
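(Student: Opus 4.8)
The plan is to mirror the reasoning behind the companion MBS result in Lemma~\ref{Lemma1}, using the fact that within the picocell tier the maximum-received-power rule always selects the \emph{nearest} PBS, since $P_\mathrm{P}\beta r^{-\alpha_2}$ is strictly decreasing in $r$. Consequently the serving-PBS distance $\left|X_{o,\mathrm{P}}\right|$ coincides with the contact distance to $\Phi_\mathrm{P}$, and the object we actually want is the distribution of that contact distance \emph{conditioned} on the event that the typical user associates with the picocell tier.

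First I would recall that for an HPPP of density $\lambda_\mathrm{P}$ the contact (nearest-neighbour) distance has density $2\pi\lambda_\mathrm{P}x\exp(-\pi\lambda_\mathrm{P}x^2)$, read off from the void probability $\exp(-\pi\lambda_\mathrm{P}x^2)$ of the disc $\mathcal{B}(o,x)$. Next I would make the association event explicit: given that the nearest PBS sits at distance $x$, the user attaches to it precisely when its received power dominates that of the strongest MBS, i.e.
\begin{align}
P_\mathrm{P}\beta x^{-\alpha_2} > \left(N-S+1\right)\frac{P_\mathrm{M}}{S}\beta\, Y^{-\alpha_1},\nonumber
\end{align}
where $Y$ is the distance to the nearest MBS and the array gain $G_a=N-S+1$ enters as the effective power boost of the macro tier. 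Solving this inequality for $Y$ turns it into the purely geometric condition $Y>\widehat{y}(x)$ with $\widehat{y}(x)=\left(\frac{\left(N-S+1\right)P_\mathrm{M}}{P_\mathrm{P}S}\right)^{1/\alpha_1}x^{\alpha_2/\alpha_1}$; equivalently, no MBS may fall inside the disc of radius $\widehat{y}(x)$.

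The conditional association probability is then a second void probability of the independent macro PPP, $\Pr[Y>\widehat{y}(x)]=\exp(-\pi\lambda_\mathrm{M}\widehat{y}(x)^2)$, which furnishes exactly the second exponential factor in \eqref{PDF_SBS_distance}. Multiplying the contact-distance density by this factor gives the joint density of the events ``nearest PBS at $x$'' and ``user served by the picocell tier''; integrating over $x$ produces the association probability $\mathcal{A}_\mathrm{P}$ of \eqref{A_P_pro}, and dividing by it normalises the joint density into the claimed conditional PDF. The hard part will be the bookkeeping in the power comparison: one must respect the reversal of the inequality when taking the $(-1/\alpha_1)$-th power of $Y$, and carry the exponent $2/\alpha_1$ cleanly from $\widehat{y}(x)^2$ onto the ratio $\left(N-S+1\right)P_\mathrm{M}/(P_\mathrm{P}S)$ and onto $x^{2\alpha_2/\alpha_1}$, so that the massive-MIMO gain lands in the exponent with the correct power.
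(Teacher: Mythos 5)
Your proposal is correct and is exactly the approach the paper itself relies on: the paper gives no explicit proof of this lemma, deferring to the flexible-cell-association method of the cited reference, and that method is precisely what you describe --- serving PBS equals nearest PBS, contact-distance density $2\pi\lambda_\mathrm{P}x\exp(-\pi\lambda_\mathrm{P}x^2)$ from the void probability of $\Phi_\mathrm{P}$, the association event given that distance expressed as a second void event of the independent macro PPP on the disc of radius $\widehat{y}(x)$, and normalisation by $\mathcal{A}_\mathrm{P}$ to obtain the conditional PDF.

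One point is worth flagging explicitly. Your derivation produces the macro-tier term $\pi\lambda_\mathrm{M}\left(\frac{(N-S+1)P_\mathrm{M}}{P_\mathrm{P}S}\right)^{2/\alpha_1}x^{2\alpha_2/\alpha_1}$ in the exponent, i.e., the power ratio carries the exponent $2/\alpha_1$, whereas the lemma as printed in \eqref{PDF_SBS_distance} and \eqref{A_P_pro} omits that exponent on the ratio. Your form is the correct one: it is what the symmetric statement in Lemma~\ref{Lemma1} gives (there the ratio $\frac{SP_\mathrm{P}}{(N-S+1)P_\mathrm{M}}$ carries the exponent $2/\alpha_2$), and it matches how this PDF is actually used later in the paper, where the corresponding term inside \eqref{CDF_SINR_P} appears as $\pi\lambda_\mathrm{M}\left(\frac{P_\mathrm{M}(N-S+1)x^{\alpha_2}}{P_\mathrm{P}S}\right)^{2/\alpha_1}$. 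So the printed statement contains a typo, and the bookkeeping step you singled out as ``the hard part'' --- carrying $2/\alpha_1$ from $\widehat{y}(x)^2$ onto the power ratio as well as onto $x^{2\alpha_2/\alpha_1}$ --- is exactly the step that catches it.
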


Note that Lemma 1 and Lemma 2 can be derived following the  approach in \cite{Han-Shin2012}.

\subsection{Achievable Ergodic Rate}
In this subsection, we derive the achievable ergodic rate of the macrocell user and the picocell user.  
\begin{lemma}
For a typical user at a random distance $\left|X_{o,{\mathrm{M}}}\right|$ from its associated MBS, the lower bound on the achievable ergodic rate of the typical macrocell user is derived as
\begin{align}\label{CDF_SINR_M}
&{R_{\rm{M}}^L} = {\log _2}\left( {1 + \frac{{{P_{\rm{M}}}}}{S}\left( {N - S + 1} \right)\beta {{\left( {\frac{{2\pi {\lambda _{\rm{M}}}}}{{{A_{\rm{M}}}}}\Delta } \right)}^{ - 1}}} \right),
\end{align}
where 
\begin{align}\label{delta_1}
\hspace{-0.3cm}\Delta  =&\int_0^\infty  {\left( {\frac{{2\pi {\lambda _{\rm{M}}}{P_{\rm{M}}}\beta {x^{2 - {\alpha _1}}}}}{{{\alpha _1} - 2}} + \frac{{2\pi {\lambda _{\rm{P}}}{P_{\rm{P}}}\beta {{\left( {D_{\rm{P}}^{\rm{M}}} {\left( x \right)} \right)}^{2 - {\alpha _2}}}}}{{{\alpha _2} - 2}} + {\delta ^2}} \right)} 
\nonumber\\ & 
\exp \left\{ { - \pi {\lambda _{\rm{M}}}{x^2} - \pi {\lambda _{\rm{P}}}{{\left( {D_{\rm{P}}^{\rm{M}}}{\left( x \right)} \right)}^2}} \right\}{x^{{\alpha _1} + 1}}dx
.
\end{align}
In \eqref{delta_1},  $D_\mathrm{P}^\mathrm{M} {\left( x \right)} = {\left( {\frac{{S{P_\mathrm{P}}}}{{\left( {N - S + 1} \right){P_\mathrm{M}}}}} \right)^{{1 \mathord{\left/
 {\vphantom {1 {{\alpha _2}}}} \right.
 \kern-\nulldelimiterspace} {{\alpha _2}}}}}{x^{{{{\alpha _1}} \mathord{\left/
 {\vphantom {{{\alpha _1}} {{\alpha _2}}}} \right.
 \kern-\nulldelimiterspace} {{\alpha _2}}}}}
$  is the minimum distance between the interfering picocell BS and the typical marcocell user.
\end{lemma}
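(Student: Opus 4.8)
The plan is to begin from the definition $R_{\mathrm{M}}=\E\!\left[\log_2\!\left(1+\mathrm{SINR}_{\mathrm{M}}\right)\right]$ and lower-bound it with Jensen's inequality. The useful fact is that $g(z)=\log_2(1+1/z)$ is convex on $(0,\infty)$, since $g''(z)=(2z+1)/\big(\ln 2\,[z(z+1)]^2\big)>0$. Writing $z=1/\mathrm{SINR}_{\mathrm{M}}$ therefore gives
\[
R_{\mathrm{M}}=\E\!\left[g(z)\right]\ge \log_2\!\left(1+\frac{1}{\E[z]}\right),\qquad z=\frac{I_1+\delta^2}{\frac{P_{\mathrm{M}}}{S}\,h_{o,\mathrm{M}}\,L\!\left(\left|X_{o,\mathrm{M}}\right|\right)},
\]
so the whole task reduces to evaluating $\E[z]$ and checking that its reciprocal equals $\frac{P_{\mathrm{M}}}{S}(N-S+1)\beta\,(2\pi\lambda_{\mathrm{M}}\Delta/A_{\mathrm{M}})^{-1}$.

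Next I would factor $\E[z]$. Because the desired array-gain fading $h_{o,\mathrm{M}}$ is independent of both the interferer processes and the serving distance, and since $L(|X|)=\beta|X|^{-\alpha_1}$, one has $\E[z]=\frac{S}{P_{\mathrm{M}}\beta}\,\E[1/h_{o,\mathrm{M}}]\,\E\!\left[(I_1+\delta^2)\left|X_{o,\mathrm{M}}\right|^{\alpha_1}\right]$. For the array-gain factor I would invoke the massive-MIMO regime $N\gg S$: under channel hardening $h_{o,\mathrm{M}}\sim\Gamma(N-S+1,1)$ concentrates about its mean, so $\E[1/h_{o,\mathrm{M}}]\approx 1/\E[h_{o,\mathrm{M}}]=1/(N-S+1)$. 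This is precisely the step that produces the factor $N-S+1$, and it is where the expression becomes an (asymptotic) lower bound rather than an exact one.

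The main calculation is then $\E\!\left[(I_1+\delta^2)\left|X_{o,\mathrm{M}}\right|^{\alpha_1}\right]$, which I would obtain by conditioning on $\left|X_{o,\mathrm{M}}\right|=x$ and integrating against the PDF of Lemma~\ref{Lemma1}; its exponential factor is exactly $\exp\{-\pi\lambda_{\mathrm{M}}x^2-\pi\lambda_{\mathrm{P}}(D_{\mathrm{P}}^{\mathrm{M}}(x))^2\}$ and the leading $x$ supplies the $x^{\alpha_1+1}$ weight seen in $\Delta$. The crux is the conditional mean interference: the maximum-received-power rule forces every interfering MBS to lie beyond $x$ (equal per-stream power) and every interfering PBS to lie beyond $D_{\mathrm{P}}^{\mathrm{M}}(x)$, and outside these guard discs the interferers remain Poisson with the original densities, so Campbell's theorem yields
\[
\E[I_{\mathrm{M},1}\mid x]=\frac{P_{\mathrm{M}}}{S}\,\E[h_{\ell,\mathrm{M}}]\,2\pi\lambda_{\mathrm{M}}\!\int_x^{\infty}\!\beta r^{1-\alpha_1}\,dr=\frac{2\pi\lambda_{\mathrm{M}}P_{\mathrm{M}}\beta\,x^{2-\alpha_1}}{\alpha_1-2},
\]
and likewise $\E[I_{\mathrm{S},1}\mid x]=2\pi\lambda_{\mathrm{P}}P_{\mathrm{P}}\beta\,(D_{\mathrm{P}}^{\mathrm{M}}(x))^{2-\alpha_2}/(\alpha_2-2)$, using $\E[h_{\ell,\mathrm{M}}]=S$ and $\E[h_{j,\mathrm{P}}]=1$. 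Adding $\delta^2$ reproduces the bracket of \eqref{delta_1}, so back-substitution gives $\E[z]=\frac{S}{P_{\mathrm{M}}(N-S+1)\beta}\cdot\frac{2\pi\lambda_{\mathrm{M}}}{A_{\mathrm{M}}}\Delta$, and inserting this into $g$ delivers the claimed $R_{\mathrm{M}}^{L}$. The points needing care are the guard-disc radii dictated by association and the convergence of the interference integrals, both of which require $\alpha_1,\alpha_2>2$.
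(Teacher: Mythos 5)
Your proposal is correct and follows essentially the same route as the paper's own proof: Jensen's inequality applied to the convex map $z \mapsto \log_2\left(1+1/z\right)$, conditioning on the serving distance using the PDF of Lemma~\ref{Lemma1} (whose exponential supplies the $\exp\left\{-\pi\lambda_{\mathrm{M}}x^2 - \pi\lambda_{\mathrm{P}}\left(D_{\mathrm{P}}^{\mathrm{M}}\left(x\right)\right)^2\right\}$ factor and the $x^{\alpha_1+1}$ weight), and Campbell's theorem with the association-induced exclusion radii $x$ and $D_{\mathrm{P}}^{\mathrm{M}}\left(x\right)$ for the conditional mean interference. The only difference is one of explicitness: the paper silently writes the prefactor $\left(\frac{P_{\mathrm{M}}}{S}\left(N-S+1\right)\beta\right)^{-1}$ in its expression for $\mathbb{E}\left\{\mathrm{SINR}_{\mathrm{M}}^{-1}\right\}$, whereas you correctly flag that replacing $\mathbb{E}\left[1/h_{o,\mathrm{M}}\right]$ (which equals $1/\left(N-S\right)$ exactly) by $1/\left(N-S+1\right)$ is a channel-hardening approximation, a point the paper glosses over.
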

\begin{proof}
The achievable ergodic rate of macrocell user is lower bounded by 
\begin{align}\label{bound}
\hspace{-0.3cm} \mathbb{E}\left\{ {{{\log }_2}\left( {1 + {\rm{SIN}}{{\rm{R}}_{\rm{M}}}} \right)} \right\} \ge {R_{\rm{M}}^L} =
{\log _2}\left( {1 + {{\left( {E\left\{ {{\rm{SIN}}{{\rm{R}}_{\rm{M}}}^{ - 1}} \right\}} \right)}^{ - 1}}} \right),
\end{align}
where 
\begin{align}\label{Exp}
 \mathbb{E}\left\{ {{\rm{SIN}}{{\rm{R}}_{\rm{M}}}^{ - 1}} \right\} = & {\left( {\frac{{{P_{\rm{M}}}}}{S}\left( {N - S + 1} \right)\beta } \right)^{ - 1}}
\nonumber\\  & \hspace{-1.5cm}
\int_0^\infty  {\left( {E\left\{ {{I_2}} \right\} + {\delta ^2}} \right)} {x^{{\alpha _1}}}{f_{\left| {{X_{o,{\rm{M}}}}} \right|}}\left( x \right)dx.
\end{align}

In \eqref{Exp}, ${f_{\left| {{X_{o,{\rm{M}}}}} \right|}}\left( x \right)$ is given in \eqref{PDF_MBS_distance}.
Using the Campbell's theorem, the expectation of the aggregate interference from the MBSs and the PBSs is derived as
\begin{align}\label{Exp2}
 \mathbb{E}\left\{ {{I_2}} \right\}  & 
& = \frac{{2\pi {\lambda _M}{P_{\rm{M}}}\beta {x^{2 - {\alpha _1}}}}}{{{\alpha _1} - 2}} + \frac{{2\pi {\lambda _{\rm{P}}}{P_{\rm{P}}}\beta {{\left( {D_{\rm{P}}^{\rm{M}}} {\left( x \right)} \right)}^{2 - {\alpha _2}}}}}{{{\alpha _2} - 2}}.
\end{align}

\end{proof}

\begin{lemma}
For a typical user at a random distance $\left|X_{o,{\mathrm{P}}}\right|$ from its associated PBS,  the achievable ergodic rate of the typical picocell user is derived as
\begin{align}\label{Rate_p}
{R_{\rm{P}}} = \mathbb{E}\left\{ {{{\log }_2}\left( {1 + \rm{SIN{R_{\rm{P}}}}} \right)} \right\} = \frac{1}{{\ln 2}}\int_0^\infty  {\frac{{1 - {\mathbb{F}_{{\rm{SIN}}{{\rm{R}}_{\rm{P}}}}}\left( \gamma  \right)}}{{1 + \gamma }}} d\gamma ,
\end{align}
where
\begin{align}\label{CDF_SINR_P}
&{\mathbb{F}_{{\rm{SIN}}{{\rm{R}}_{\rm{P}}}}}\left( \gamma  \right) =  1 - \frac{{2\pi {\lambda _{\rm{P}}}}}{{{\mathcal{A}_{\rm{P}}}}}\int_0^\infty  {\exp \left\{ { - 2\pi {\lambda _\mathrm{M}}{\Phi _3}\left( x \right) - 2\pi {\lambda _{\rm{P}}}} \right.} 
\nonumber\\ & \frac{{{x^2}\gamma }}{{{\alpha _2} - 2}}{}_2{F_1}\left( {1,1 - \frac{{ - 2}}{{{\alpha _2}}},2 - \frac{2}{{{\alpha _2}}}, - \gamma } \right) - \frac{{{x^{{\alpha _2}}}\gamma {\delta ^2}}}{{{P_{\rm{P}}}\beta }}
\nonumber\\ & \left. { - \pi {\lambda_{\rm{P}}}{x^2} - \pi {\lambda _\mathrm{M}}{{\left( {\frac{{{P_\mathrm{M}}\left( {N - S + 1} \right){x^{{\alpha _2}}}}}{{{P_{\rm{P}}}S}}} \right)}^{2/{\alpha _1}}}} \right\}xdx.
\end{align}
In \eqref{CDF_SINR_P}, we have
\begin{align}\label{Phi_3}
{\Phi _3}\left( x \right) 
 = &{}_2{F_1}\left[ {1 - {2 \mathord{\left/
 {\vphantom {2 {{\alpha _1}}}} \right.
 \kern-\nulldelimiterspace} {{\alpha _1}}},S,2 - {2 \mathord{\left/
 {\vphantom {2 {{\alpha _1}}}} \right.
 \kern-\nulldelimiterspace} {{\alpha _1}}}, - \frac{{\gamma {{\rm{P}}_{\rm{M}}}{x^{{\alpha _2}}}}}{{S{P_{\rm{P}}}{{\left( {D_{\rm{M}}^{\rm{P}}}\left( x \right) \right)}^{{\alpha _1}}}}}} \right]
 \nonumber\\&\frac{{\gamma {P_{\rm{M}}}{x^{{\alpha _2}}}{{\left( {D_{\rm{M}}^P}\left( x \right)\right)}^{2 - {\alpha _1}}}}}{{S{P_{\rm{P}}}\left( {{\alpha _1} - 2} \right)}} 
  + \sum\limits_{k = 2}^S {{S}\choose{k}} \frac{1}{{{\alpha _1}}}{\left( { - \frac{{\gamma {P_{\rm{M}}}{x^{{\alpha _2}}}}}{{S{P_{\rm{P}}}}}} \right)^{{2 \mathord{\left/
 {\vphantom {2 {{\alpha _1}}}} \right.
 \kern-\nulldelimiterspace} {{\alpha _1}}}}}
 \nonumber\\  & B\left( { - \frac{{\gamma {P_{\rm{M}}}{x^{{\alpha _2}}}}}{{S{P_{\rm{P}}}{{\left( {D_{\rm{M}}^{\rm{P}}} \left( x \right)\right)}^{{\alpha _1}}}}};k - {2 \mathord{\left/
 {\vphantom {2 {{\alpha _1}}}} \right.
 \kern-\nulldelimiterspace} {{\alpha _1}}},1 - S} \right),
\end{align}
where  $D_{\rm{M}}^{\rm{P}}\left( x \right) = {\left( {\frac{{\left( {N - S + 1} \right){P_{\rm{M}}}}}{{S{P_{\rm{P}}}}}} \right)^{{1 \mathord{\left/
 {\vphantom {1 {{\alpha _1}}}} \right.
 \kern-\nulldelimiterspace} {{\alpha _1}}}}}{x^{{{{\alpha _2}} \mathord{\left/
 {\vphantom {{{\alpha _2}} {{\alpha _1}}}} \right.
 \kern-\nulldelimiterspace} {{\alpha _1}}}}} 
$ is the minimum distance between the interfering macrocell BS and the typical picocell user,
 $B\left( { \cdot ; \cdot , \cdot } \right)$is the incomplete
beta function \cite[8.391]{gradshteyn}, and ${}_2{F_1}\left[ { \cdot , \cdot , \cdot } \right]$ is the Gauss
hypergeometric function \cite[9.142]{gradshteyn}.
\end{lemma}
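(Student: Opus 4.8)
The plan is to start from the standard ergodic-rate identity and thereby reduce the entire statement to finding a closed form for the CCDF of $\mathrm{SINR}_\mathrm{P}$. Setting $Z=\mathrm{SINR}_\mathrm{P}$ and using $\ln(1+Z)=\int_0^\infty (1+\gamma)^{-1}\mathbf{1}\{Z>\gamma\}\,d\gamma$ followed by Fubini's theorem gives
\begin{align}
R_\mathrm{P}&=\frac{1}{\ln 2}\int_0^\infty \frac{\Pr(\mathrm{SINR}_\mathrm{P}>\gamma)}{1+\gamma}\,d\gamma \nonumber\\ &=\frac{1}{\ln 2}\int_0^\infty \frac{1-\mathbb{F}_{\mathrm{SINR}_\mathrm{P}}(\gamma)}{1+\gamma}\,d\gamma, \nonumber
\end{align}
so the remaining work is exactly the evaluation of $\mathbb{F}_{\mathrm{SINR}_\mathrm{P}}(\gamma)$ in \eqref{CDF_SINR_P}.

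Next I would condition on the serving distance $|X_{o,\mathrm{P}}|=x$ and average over it using the PDF in \eqref{PDF_SBS_distance}, which contributes the prefactor $2\pi\lambda_\mathrm{P}/\mathcal{A}_\mathrm{P}$, the weight $x\,dx$, and the two association exponents $-\pi\lambda_\mathrm{P}x^2-\pi\lambda_\mathrm{M}(\cdots)^{2/\alpha_1}$. The crucial simplification is that the intended gain $g_{o,\mathrm{P}}\sim\exp(1)$: the coverage event $\{P_\mathrm{P}g_{o,\mathrm{P}}\beta x^{-\alpha_2}>\gamma(I_2+\delta^2)\}$ then has conditional probability $\E[\exp(-s(I_2+\delta^2))]$ with $s=\gamma x^{\alpha_2}/(P_\mathrm{P}\beta)$, i.e.\ a Laplace transform of the interference-plus-noise. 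The noise yields the deterministic factor $\exp(-\gamma\delta^2 x^{\alpha_2}/(P_\mathrm{P}\beta))$, matching the corresponding exponent term in \eqref{CDF_SINR_P}, and because $I_{\mathrm{M},2}$ and $I_{\mathrm{S},2}$ are generated by independent PPPs the transform factors as $\mathcal{L}_{I_{\mathrm{M},2}}(s)\,\mathcal{L}_{I_{\mathrm{S},2}}(s)$.

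Each factor I would compute through the probability generating functional of the PPP. For the picocell tier, every node carries $\exp(1)$ fading and exponent $\alpha_2$, and maximum-power association pushes the nearest co-tier interferer beyond the serving radius $x$; averaging over the fading leaves the per-node transform $(1+\gamma(x/r)^{\alpha_2})^{-1}$, so that
\begin{align}
\mathcal{L}_{I_{\mathrm{S},2}}(s)=\exp\Big(-2\pi\lambda_\mathrm{P}\int_x^\infty \frac{\gamma(x/r)^{\alpha_2}}{1+\gamma(x/r)^{\alpha_2}}\,r\,dr\Big), \nonumber
\end{align}
and the change of variable $v=(x/r)^{\alpha_2}$ turns this into the $\tfrac{x^2\gamma}{\alpha_2-2}\,{}_2F_1(\cdots)$ term via an Euler-type integral. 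For the macrocell tier the equivalent gain is $\Gamma(S,1)$, giving per-node transform $(1+s\tfrac{P_\mathrm{M}}{S}\beta r^{-\alpha_1})^{-S}$, while the association rule now bounds the nearest MBS below by $D_\mathrm{M}^\mathrm{P}(x)$, so that
\begin{align}
\mathcal{L}_{I_{\mathrm{M},2}}(s)&=\exp\Big(-2\pi\lambda_\mathrm{M}\!\int_{D_\mathrm{M}^\mathrm{P}(x)}^\infty\! \big[1-(1+s\tfrac{P_\mathrm{M}}{S}\beta r^{-\alpha_1})^{-S}\big]r\,dr\Big) \nonumber\\ &=\exp\big(-2\pi\lambda_\mathrm{M}\Phi_3(x)\big). \nonumber
\end{align}

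I expect this last integral to be the main obstacle. Unlike the exponential case, the integer Gamma order $S$ must be handled by expanding $1-(1+u)^{-S}$ binomially; via a change of variable in $r^{-\alpha_1}$ the low-order part produces the leading Gauss hypergeometric term $\propto (D_\mathrm{M}^\mathrm{P})^{2-\alpha_1}\,{}_2F_1[1-2/\alpha_1,S,2-2/\alpha_1,\cdot]$, while the remaining terms ($k=2,\dots,S$) reduce to the incomplete beta functions $B(\,\cdot\,;k-2/\alpha_1,1-S)$ weighted by $\binom{S}{k}$, exactly reproducing $\Phi_3(x)$ in \eqref{Phi_3}. The convergence of both tier integrals at $r\to\infty$ requires $\alpha_1,\alpha_2>2$, and the bookkeeping of the lower limits $x$ and $D_\mathrm{M}^\mathrm{P}(x)$ together with the constant $s=\gamma x^{\alpha_2}/(P_\mathrm{P}\beta)$ is where most care is needed. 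Collecting the noise factor, the two Laplace transforms, and the association PDF under one exponential and integrating over $x$ then yields \eqref{CDF_SINR_P}, and substitution into the rate identity completes the proof.
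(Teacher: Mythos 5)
Your proposal is correct and follows essentially the same route as the paper's proof: conditioning on the serving distance with the PDF from Lemma 2, exploiting $g_{o,\mathrm{P}}\sim\exp(1)$ to turn the coverage probability into a Laplace transform of interference plus noise, factorizing $\mathcal{L}_{I_2}=\mathcal{L}_{I_{\mathrm{M},2}}\mathcal{L}_{I_{\mathrm{P},2}}$, and applying the PGFL with lower limits $D_\mathrm{M}^\mathrm{P}(x)$ and $x$ respectively. You additionally spell out two steps the paper leaves implicit (the layer-cake rate identity and the binomial-expansion evaluation of the macro-tier integral into the ${}_2F_1$ and incomplete-beta terms of $\Phi_3$), which is consistent with, not divergent from, the paper's argument.
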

\begin{proof}
The CDF of  $\mathrm{SINR}_\mathrm{P}$ is expressed as 
\begin{align}\label{rawCDF_P}
&{\mathbb{F}_{{\rm{SIN}}{{\rm{R}}_{\rm{P}}}}}\left( \gamma  \right) = \int_0^\infty  {\Pr \left[ {{g_{o,P}} \le \frac{{\gamma \left( {{I_2} + {\delta ^2}} \right)}}{{{P_{\rm{P}}}\beta {x^{ - {\alpha _2}}}}}} \right]} {f_{\left| {{X_{o,P}}} \right|}}\left( x \right)dx
\nonumber\\&  = 1 - \int_0^\infty  {\exp \left\{ { - \frac{{\gamma {\delta ^2}{x^{{\alpha _2}}}}}{{{P_{\rm{P}}}\beta }}} \right\}{\mathcal{L}_{{I_2}}}\left( {\frac{{\gamma {x^{{\alpha _2}}}}}{{{P_{\rm{P}}P}\beta }}} \right)} {f_{\left| {{X_{o,P}}} \right|}}\left( x \right)dx
\end{align}
To solve the laplace transform of the aggregate interference from macrocell BSs and picocell BS, we first utilize ${\mathcal{L}_{{I_2}}}\left( {\frac{{\gamma {x^{{\alpha _2}}}}}{{{P_{\rm{P}}}\beta }}} \right) = {\mathcal{L}_{{I_{{\rm{M}},2}}}}\left( {\frac{{\gamma {x^{{\alpha _2}}}}}{{{P_{\rm{P}}}\beta }}} \right){\mathcal{L}_{{I_{{\rm{P}},2}}}}\left( {\frac{{\gamma {x^{{\alpha _2}}}}}{{{P_{\rm{P}}}\beta }}} \right). 
$  The laplace transform of ${I_{{\rm{M}},2}}$ is given by
\begin{align}\label{LIM2}
{\mathcal{L}_{{I_{{\rm{M}},2}}}}\left( s \right) \hspace{-1.3cm}&
\nonumber\\& = {\mathbb{E}_{{I_{{\rm{M}},2}}}}\left\{ {\prod\limits_{\ell  \in {\Phi _{\rm{M}}}} {{\mathbb{E}_g}\left\{ {\exp \left( { - s\frac{{{P_{\rm{M}}}{g_{\ell ,{\rm{M}}}}\beta {{\left| {{X_{\ell ,{\rm{M}}}}} \right|}^{ - {\alpha _1}}}}}{S}} \right)} \right\}} } \right\}
\nonumber\\&
\mathop  = \limits^{\left( a \right)}\exp \left\{ { - 2\pi {\lambda _{\rm{M}}}\int_{D_{\rm{M}}^{\rm{P}}\left( x \right)}^\infty  {\left( {1 - {{\left( {1 + s\frac{{{P_{\rm{M}}}\beta {y^{ - {\alpha _1}}}}}{S}} \right)}^{ - S}}} \right)ydy} } \right\}, 
\end{align}
where  $(a)$ follows from probability generating functional (PGFL) of PPP  \cite{stoyanstochastic} and the Cartesian to polar coordinates transformation.
The Laplace transform of ${I_{{\rm{P}},2}}$ is given by
\begin{align}\label{LIP2}
{\mathcal{L}_{{I_{{\rm{P}},2}}}}\left( s \right) \hspace{-1.2cm}&
\nonumber\\& = {\mathbb{E}_{{I_{{\rm{P}},2}}}}\left\{ {\prod\limits_{j \in {\Phi _{\rm{P}}}\backslash {B_{o,{\rm{P}}}}} {{E_g}\left\{ {\exp \left( { - s{P_{\rm{P}}}{g_{j,{\rm{P}}}}\beta {{\left| {{X_{j,{\rm{P}}}}} \right|}^{ - {\alpha _2}}}} \right)} \right\}} } \right\}
\nonumber\\&
  = \exp \left( { - 2\pi {\lambda _{\rm{P}}}\int_x^\infty  {\left( {1 - {{\left( {1 + s{P_{\rm{P}}}\beta {r^{ - {\alpha _2}}}} \right)}^{ - 1}}} \right)rdr} } \right).
\end{align}
Substituting \eqref{LIM2}  and \eqref{LIP2} into \eqref{rawCDF_P}, we finally derive \eqref{CDF_SINR_P}.
\end{proof}

\subsection{Secrecy Outage Probability}
Secrecy outage probability is the principle  performance metric in the passive eavesdropping scenario.  The secrecy outage is declared when the instantaneous secrecy rate  is less than the targeted secrecy rate $R_s$   \cite{junzhu2014secure}.

\begin{theorem}
For a typical user associated with the MBS, the upper bound on the secrecy outage probability of this typical user is given by 
\begin{align} 
P_{out}^{\rm{M}}\left( {{R_s}} \right) = &\Pr \left\{ {{R_{\rm{M}}} - {{\log }_2}\left( {1 + {\rm{SINR}}_{{e^{\rm{*}}}}^{\rm{M}}} \right) \le {R_s}} \right\}
\nonumber\\
 = & 1 - {\mathbb{F}_{{\rm{SINR}}_{{e^{\rm{*}}}}^{\rm{M}}}}\left( {{2^{\left( {R{_{\rm{M}}} - {R_s}} \right)}} - 1} \right), \label{secrecy_out_M}
\end{align} 
where   $R_{\rm{M}}$ is the lower bound of the  ergodic rate of the macrocell user  in \eqref{bound}, $R_s$ is the targeted secrecy rate, and the CDF of the receive SINR at the most malicious eavesdropper is derived as
\begin{align}\label{CDF_SINR_M_E}
&{\mathbb{F}_{{\rm{SINR}}_{{e^{\rm{*}}}}^{\rm{M}}}}\left( \gamma  \right) =  
\exp \left\{ { - 2\pi {\lambda _{\rm{E}}}\int_0^\infty  {\exp \left\{ { - \gamma S{\delta ^2}{x^{{\alpha _1}}}{{\left( {\beta {P_{\rm{M}}}} \right)}^{ - 1}}} \right.} } \right.
\nonumber \\&  - 2\pi {\lambda _{\rm{M}}}\sum\limits_{k = 1}^S {{{S}\choose{k}}} \frac{{{{\left( {\gamma {x^{{\alpha _1}}}} \right)}^{{2 \mathord{\left/
 {\vphantom {2 {{\alpha _1}}}} \right.
 \kern-\nulldelimiterspace} {{\alpha _1}}}}}\Gamma \left( {k - {2 \mathord{\left/
 {\vphantom {2 {{\alpha _1}}}} \right.
 \kern-\nulldelimiterspace} {{\alpha _1}}}} \right)\Gamma \left( { - k + {2 \mathord{\left/
 {\vphantom {2 {{\alpha _1}}}} \right.
 \kern-\nulldelimiterspace} {{\alpha _1}}} + S} \right)}}{{{\alpha _1}\Gamma \left( S \right)}}
 \nonumber \\&
 \left. { - \frac{{2{\pi ^2}{\lambda _{\rm{P}}}}}{{{\alpha _2}}}{{\left( {\gamma S{x^{{\alpha _1}}}{P_{\rm{P}}}{{\left( {{P_{\rm{M}}}} \right)}^{ - 1}}} \right)}^{{2 \mathord{\left/
 {\vphantom {2 {{\alpha _2}}}} \right.
 \kern-\nulldelimiterspace} {{\alpha _2}}}}}Csc\left[ {{{2\pi } \mathord{\left/
 {\vphantom {{2\pi } {{\alpha _2}}}} \right.
 \kern-\nulldelimiterspace} {{\alpha _2}}}} \right]} \right\}
 \nonumber \\& \left. {{{\left( {\gamma  + 1} \right)}^{ - \left( {S - 1} \right)}}xdx} \right\}
.
\end{align}  


\end{theorem}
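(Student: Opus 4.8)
The plan is to separate the argument into a short definitional reduction and the substantive stochastic-geometry computation of the eavesdropper SINR distribution. First I would establish the first line of \eqref{secrecy_out_M} directly from the definition of a secrecy outage: the event $\{R_{\rm M} - \log_2(1 + {\rm SINR}_{e^*}^{\rm M}) \le R_s\}$ rearranges to $\{{\rm SINR}_{e^*}^{\rm M} \ge 2^{R_{\rm M} - R_s} - 1\}$, so that $P_{out}^{\rm M} = 1 - \mathbb{F}_{{\rm SINR}_{e^*}^{\rm M}}(2^{R_{\rm M}-R_s}-1)$. The ``upper bound'' qualifier enters precisely by using the \emph{lower} bound $R_{\rm M}^L \le R_{\rm M}$ from \eqref{bound} in place of the true legitimate ergodic rate: lowering the legitimate rate enlarges the outage event, so the resulting probability can only increase.

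The core is then the CDF \eqref{CDF_SINR_M_E}. I would write $\mathbb{F}_{{\rm SINR}_{e^*}^{\rm M}}(\gamma) = \Pr\{\forall e \in \Phi_{\rm E}\colon {\rm SINR}_e^{\rm M} \le \gamma\}$ and, treating each eavesdropper's SINR as an independent mark attached to the points of $\Phi_{\rm E}$, invoke the void-probability characterization of a PPP to obtain $\exp\{-\lambda_{\rm E}\int_{\mathbb{R}^2}\Pr\{{\rm SINR}_e^{\rm M}>\gamma\}\,dx\}$, converting to polar coordinates to produce the outer $2\pi\lambda_{\rm E}\int_0^\infty(\cdots)x\,dx$. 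This independent-mark reduction is the conceptual device that renders the ``most malicious eavesdropper'' tractable, since the inter-cell interference is in reality shared across all eavesdroppers.

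Next I would compute the per-eavesdropper tail $\Pr\{{\rm SINR}_e^{\rm M}>\gamma\}$ at distance $x=|X_{o,e}|$. Since $h_{o,e}\sim\exp(1)$, this equals $\mathbb{E}[\exp\{-s(I_A + I_{{\rm M},e}+I_{{\rm S},e}+\delta^2)\}]$ with $s = \gamma S x^{\alpha_1}/(P_{\rm M}\beta)$, and the mutually independent contributions factor. The noise gives the deterministic $\exp\{-\gamma S\delta^2 x^{\alpha_1}/(P_{\rm M}\beta)\}$; the intra-cell term $I_A=\frac{P_{\rm M}}{S}h_e\beta x^{-\alpha_1}$ with $h_e\sim\Gamma(S-1,1)$ gives $\mathbb{E}[e^{-\gamma h_e}]=(1+\gamma)^{-(S-1)}$, which is exactly the $(\gamma+1)^{-(S-1)}$ factor in \eqref{CDF_SINR_M_E}; and $I_{{\rm M},e}$, $I_{{\rm S},e}$ contribute their Laplace transforms at $s$, computed via the PGFL of $\Phi_{\rm M}$ and $\Phi_{\rm P}$ exactly as in \eqref{LIM2}--\eqref{LIP2}.

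The main obstacle is the closed-form evaluation of these two Laplace transforms. For $I_{{\rm S},e}$ (exponential fading), the substitution $v = sP_{\rm P}\beta r^{-\alpha_2}$ turns $\int_0^\infty(1-(1+sP_{\rm P}\beta r^{-\alpha_2})^{-1})r\,dr$ into the Beta integral $\int_0^\infty v^{-2/\alpha_2}(1+v)^{-1}dv=\pi\,{\rm Csc}(2\pi/\alpha_2)$ by Euler's reflection formula, yielding the cosecant term. For $I_{{\rm M},e}$ (Gamma-$S$ fading) the trick is the finite expansion $1-(1+u)^{-S}=\sum_{k=1}^S\binom{S}{k}u^k(1+u)^{-S}$ with $u = sP_{\rm M}\beta y^{-\alpha_1}/S$; term $k$, after $v=u$, becomes $\binom{S}{k}\frac{(\gamma x^{\alpha_1})^{2/\alpha_1}}{\alpha_1}\int_0^\infty v^{k-2/\alpha_1-1}(1+v)^{-S}dv=\binom{S}{k}\frac{(\gamma x^{\alpha_1})^{2/\alpha_1}}{\alpha_1}\frac{\Gamma(k-2/\alpha_1)\Gamma(S-k+2/\alpha_1)}{\Gamma(S)}$, which reproduces the sum in \eqref{CDF_SINR_M_E} exactly. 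Assembling the four factors inside the void-probability exponential and integrating against $2\pi\lambda_{\rm E}x\,dx$ gives the claimed expression; convergence of the Beta and cosecant integrals requires $\alpha_1,\alpha_2>2$, the standing path-loss assumption.
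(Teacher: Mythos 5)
Your proposal is correct and follows essentially the same route as the paper: the paper's entire proof of Theorem~1 is a one-line pointer to the PGFL/Laplace-transform machinery of Lemma~4, and your derivation is exactly that machinery carried out --- exponential-fading CCDF at the eavesdropper, factorized Laplace transforms of the noise, intra-cell, MBS and PBS interference terms, the void-probability/PGFL treatment of $\Phi_\mathrm{E}$ for the maximum, and the Beta-integral evaluations (reflection formula and binomial expansion) that reproduce the cosecant term and the Gamma-function sum in \eqref{CDF_SINR_M_E}. Your explicit remarks on why the rate lower bound yields an \emph{upper} bound on outage, and on the independent-interference treatment across eavesdroppers, fill in steps the paper leaves unstated but do not change the argument.
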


\begin{theorem}
For a typical user  associated with the PBS,  the secrecy outage probability of this typical user is derived as 
\begin{align}
P_{out}^{\rm{P}}\left( {{R_s}} \right) =& \Pr \left\{ {{R_{\rm{P}}} - {{\log }_2}\left( {1 + {\rm{SINR}}_{{e^{\rm{*}}}}^{\rm{P}}} \right) \le {R_s}} \right\}
\nonumber\\
 = & 1 - {\mathbb{F}_{{\rm{SINR}}_{{e^{\rm{*}}}}^{\rm{P}}}}\left( {{2^{\left( {R{_{\rm{P}}} - {R_s}} \right)}} - 1} \right) \label{secrecy_out_P},
\end{align} 
where 
 $R_{\rm{P}}$ is the achievable ergodic rate of the  picocell user in  \eqref{Rate_p}, and
the CDF of the receive SINR at the most malicious eavesdropper is given by
\begin{align}\label{CDF_SINR_P_E}
&{\mathbb{F}_{{\rm{SINR}}_{{e^{\rm{*}}}}^{\rm{P}}}}\left( \gamma  \right) =  \exp \left\{ { - 2\pi {\lambda _{\rm{E}}}\int_0^\infty  {\exp \left\{ { - \gamma {\delta ^2}{x^{{\alpha _2}}}{{\left( {\beta {P_{\rm{P}}}} \right)}^{ - 1}}} \right.} } \right.\
\nonumber \\&  - 2\pi {\lambda _M}\sum\limits_{k = 1}^S {{S}\choose{k}}{\left( {\frac{{\gamma {P_{\rm{M}}}}}{{{P_{\rm{P}}}S}}} \right)^{{2 \mathord{\left/
 {\vphantom {2 {{\alpha _1}}}} \right.
 \kern-\nulldelimiterspace} {{\alpha _1}}}}}\frac{{\Gamma \left( { - k + {2 \mathord{\left/
 {\vphantom {2 {{\alpha _1}}}} \right.
 \kern-\nulldelimiterspace} {{\alpha _1}}} + S} \right)\Gamma \left( {k - {2 \mathord{\left/
 {\vphantom {2 {{\alpha _1}}}} \right.
 \kern-\nulldelimiterspace} {{\alpha _1}}}} \right)}}{{{\alpha _1}\Gamma \left( S \right)}}
 \nonumber \\& \left. {\left. {{x^{{{2{\alpha _2}} \mathord{\left/
 {\vphantom {{2{\alpha _2}} {{\alpha _1}}}} \right.
 \kern-\nulldelimiterspace} {{\alpha _1}}}}} - \frac{{2{\pi ^2}{\lambda _{\rm{P}}}}}{{{\alpha _2}}}{\gamma ^{{2 \mathord{\left/
 {\vphantom {2 {{\alpha _2}}}} \right.
 \kern-\nulldelimiterspace} {{\alpha _2}}}}}Csc\left[ {{{2\pi } \mathord{\left/
 {\vphantom {{2\pi } {{\alpha _2}}}} \right.
 \kern-\nulldelimiterspace} {{\alpha _2}}}} \right]{x^2}} \right\}xdx} \right\}
 .
\end{align}

\end{theorem}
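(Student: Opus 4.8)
The plan is to reduce the secrecy outage probability to a single evaluation of the CCDF of the eavesdropper's maximum SINR, and then to compute that CDF explicitly through the probability generating functional (PGFL) of the eavesdropper process. First I would treat $R_{\rm{P}}$ as the deterministic ergodic rate of \eqref{Rate_p} and rearrange the outage event: since $R_{\rm{P}} - \log_2(1 + \mathrm{SINR}_{e^*}^{\rm{P}}) \le R_s$ is equivalent to $\mathrm{SINR}_{e^*}^{\rm{P}} \ge 2^{R_{\rm{P}} - R_s} - 1$, the outage probability equals $1 - \Pr\{\mathrm{SINR}_{e^*}^{\rm{P}} < 2^{R_{\rm{P}}-R_s}-1\} = 1 - \mathbb{F}_{\mathrm{SINR}_{e^*}^{\rm{P}}}(2^{R_{\rm{P}}-R_s}-1)$, which is the first line of the claim. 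The substance of the proof is therefore the closed form \eqref{CDF_SINR_P_E}.

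Because the most malicious eavesdropper takes the maximum over $\Phi_{\rm{E}}$ in \eqref{SBS_Eve_SINR}, the event $\{\mathrm{SINR}_{e^*}^{\rm{P}} < \gamma\}$ is exactly the event that \emph{every} eavesdropper has SINR below $\gamma$. Treating the per-eavesdropper SINRs as conditionally independent given the interference statistics, I would invoke the void-probability form of the PGFL for the homogeneous process $\Phi_{\rm{E}}$, giving $\mathbb{F}_{\mathrm{SINR}_{e^*}^{\rm{P}}}(\gamma) = \exp\{-\lambda_{\rm{E}}\int_{\mathbb{R}^2}\Pr[\mathrm{SINR}_e \ge \gamma]\,dx\}$, whose polar form is the outer $2\pi\lambda_{\rm{E}}\int_0^\infty(\cdots)\,x\,dx$ structure of \eqref{CDF_SINR_P_E}. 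For an eavesdropper at distance $x$ from the serving PBS, since $g_{o,e}\sim\exp(1)$, conditioning on the aggregate interference and averaging gives $\Pr[\mathrm{SINR}_e \ge \gamma] = \exp\{-s\delta^2\}\,\mathcal{L}_{I_{\mathrm{M},\mathrm{P},e}}(s)\,\mathcal{L}_{I_{\mathrm{S},\mathrm{P},e}}(s)$ with $s = \gamma x^{\alpha_2}/(P_{\rm{P}}\beta)$, the two transforms factoring because the macro and pico interference fields are independent; the exponential prefactor reproduces the first term of the inner exponent in \eqref{CDF_SINR_P_E}.

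I would then evaluate the two Laplace transforms along the lines of \eqref{LIM2} and \eqref{LIP2}, but crucially with the radial integration starting at the origin, since the passive eavesdropper has no association and hence no exclusion disc. For the macrocell interference with $\Gamma(S,1)$ fading I use the binomial identity $1-(1+u)^{-S} = \sum_{k=1}^S \binom{S}{k} u^k (1+u)^{-S}$, substitute $t = s P_{\rm{M}}\beta r^{-\alpha_1}/S$, and recognize each resulting integral as a Beta function $B(k-2/\alpha_1,\,S-k+2/\alpha_1)$; converting to Gamma functions yields the finite sum over $k$ in the second term of \eqref{CDF_SINR_P_E}, with the $(\gamma P_{\rm{M}}/(P_{\rm{P}}S))^{2/\alpha_1}x^{2\alpha_2/\alpha_1}$ scaling coming from $s$. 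For the picocell interference with $\exp(1)$ fading the integrand is $u/(1+u)$, whose integral evaluates to $\Gamma(1-2/\alpha_2)\Gamma(2/\alpha_2)$, and Euler's reflection formula converts this to the $\csc(2\pi/\alpha_2)$ factor of the third term, with the $\gamma^{2/\alpha_2}x^2$ scaling again supplied by $s$. Collecting the three contributions and inserting them into the PGFL gives \eqref{CDF_SINR_P_E}.

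The hard part will be the macrocell transform: extracting the finite-sum closed form requires both the binomial expansion and the Beta-to-Gamma reductions, and one must check the convergence conditions $2 < \alpha_1$ and $2 < \alpha_2$ that make the complete Beta/Gamma integrals finite (it is precisely the absence of an exclusion disc that replaces the incomplete Beta functions of Lemma~4 by these complete ones). A secondary subtlety worth flagging is the conditional-independence treatment of the eavesdroppers underlying the PGFL step, which neglects the interference correlation induced by the shared BS locations across eavesdroppers.
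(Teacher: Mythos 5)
Your proposal is correct and follows essentially the route the paper intends: the paper gives no detailed proof of Theorem~2, stating only that it follows the approach of Lemma~4, and your derivation is exactly that approach adapted to the eavesdropper side --- an outer PGFL/void-probability step over $\Phi_{\rm{E}}$ for the maximum SINR, exponential-fading conditioning giving $e^{-s\delta^2}\mathcal{L}_{I_{\mathrm{M},\mathrm{P},e}}(s)\mathcal{L}_{I_{\mathrm{S},\mathrm{P},e}}(s)$, and interference Laplace transforms with no exclusion disc, so the incomplete Beta functions of Lemma~4 become complete Beta/Gamma integrals and, via Euler reflection, the $\csc(2\pi/\alpha_2)$ term, reproducing \eqref{CDF_SINR_P_E} exactly. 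Your flag about the conditional-independence (interference-correlation) assumption behind the eavesdropper PGFL factorization is also apt, since that is precisely why the paper's expressions are properly upper bounds rather than exact results.
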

The results in  Theorem 1 and Theorem 2 can be derived following the similar approach in the proof for  Lemma 4.

The  secrecy outage probability of the HetNets user is given by 
\begin{align}
{P_{out}}\left( {{R_s}} \right) = P_{out}^\mathrm{M}\left( {{R_s}} \right){\mathcal{A}_\mathrm{M}} + P_{out}^\mathrm{P}\left( {{R_s}} \right){\mathcal{A}_\mathrm{P}}, 
\label{definition}
\end{align}
where 
 ${\mathcal{A}_\mathrm{M}}$ and ${\mathcal{A}_\mathrm{P}}$ are the user cell association probability in the macrocell and the picocell, which are derived in \eqref{A_M_pro}  and \eqref{A_P_pro}.

\section{Numerical Examples}

In this section, we evaluate the achievable ergodic rate and the secrecy outage probability of the considered massive MIMO HetNets based on the analytical results derived in Section III and Monte Carlo simulation. We consider a downlink HetNets  in a circular region with radius 100m. In all simulations, we assume that the network  operates at the carrier frequencey 1GHz,  the bandwidth is 10MHz, the transmit power of the MBS is  ${P_{\mathrm{M}}} = 46$ dBm,  the  transmit power of the PBS is ${P_{\mathrm{P}}}= 37$  dBm, the path loss exponent of macrocell is $\alpha_1 = 3.5$, the path loss exponent of picocell is $\alpha_2 = 4$, the density of MBSs  is $\lambda_{\mathrm{M}} = 10^{-3}$, the density of eavesdroppers  is $\lambda_{\mathrm{E}} = 10^{-1}$,   the users simultaneously served by each MBS is $S=10$,  and the thermal noise is $\sigma^2 = -90$ dBm. Both figures show that the analytical plots have a good match  with the simulation plots.

Fig.~\ref{fig:1} plots the  ergodic rate of the marcocell user and the picocell user versus the number of antennas at each MBS $N$ using \eqref{CDF_SINR_M} and \eqref{Rate_p}. It is shown that the achievable ergodic rate improves with increasing $N$, due to the large array gain. This indicates that massive MIMO MBS carries more data traffic.  Interestingly, increasing the density of PBSs, the achievable ergodic rates of  the marcocell user and the picocell user degrade. This is due to the dominant impact of increased intercell interference brought by the PBSs.

\begin{figure}[t!]
    \begin{center}
        \includegraphics[width=3.0 in,height=2.5in]{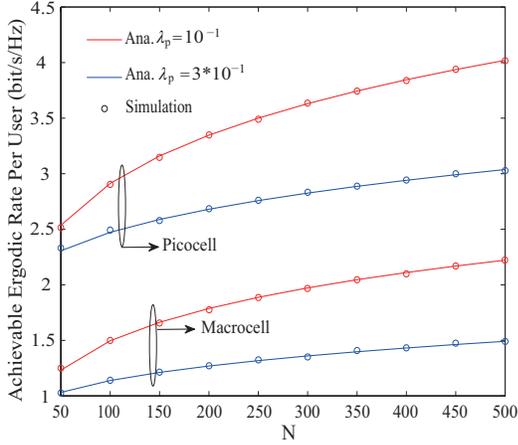}
        \caption{Achievable ergodic rate versus the number of antennas at each MBS}
        \label{fig:1}
    \end{center}
\end{figure}


Fig.~\ref{fig:2}  plots the achievable secrecy outage probability  versus the  density of the PBSs. We set the number of antennas at each MBS as $N = 200$, and define the targeted secrecy rate at the marcocell user as $ R_s = \rho { R_\mathrm{M}}$, and  the targeted secrecy rate at the picocell user as $ R_s = \rho {R_\mathrm{P}}$.  It is assumed that $\rho =0.5$.
We  see that the secrecy outage probability of the picocell user decreases with increasing $\lambda_{\rm{P}}$, due to the fact that more interference results in lowering $\mathrm{SINR}_{\rm{P}}$ in \eqref{SINR_Small}. 


More importantly, the secrecy outage probability of macrocell user first degrades then improves with increasing $\lambda_{\rm{P}}$. The reason is that: 1) 
 Increasing $\lambda_{\rm{P}}$ 
increases  the interference from PBSs, thus greatly degrades 
$\mathrm{SINR}_{\rm{M}}$ in \eqref{SINR_Macro}; 2) For very large density of PBSs, the interference from the interfering PBSs  dominates  ${\mathrm{SINR}_{{e^{\rm{*}}}}^\mathrm{M}}$  in \eqref{MBS_Eve_SINR}.  Increasing  $\lambda_{\rm{P}}$   greatly decreases  the distance between the interfering PBSs and the typical eavesdropper, and thus largely  decreases ${\mathrm{SINR}_{{e^{\rm{*}}}}^\mathrm{M}}$. 


\begin{figure}[t!]
    \begin{center}
        \includegraphics[width=3.0 in,height=2.5in]{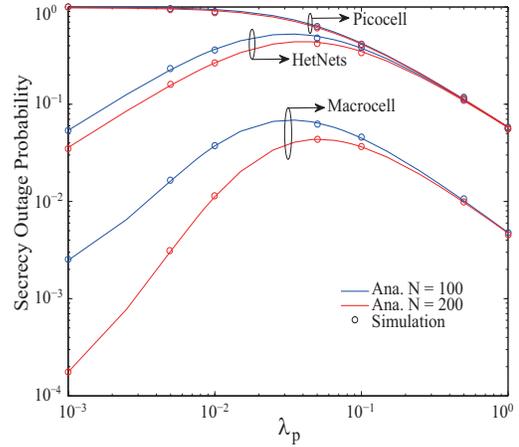}
        \caption{Secrecy outage probability versus the  density of PBSs}
        \label{fig:2}
    \end{center}
\end{figure}

\section{Conclusion}

In this paper, we took into account the physical layer security  for the downlink massive MIMO HetNets with linear zero-forcing beamforming (ZFBF), where the non-colluding  malicious eavesdroppers intercept the downlink  user's  transmission. 
Our work demonstrated the importance of BS deployment density and massive MIMO design on safeguarding the secure downlink transmission in HetNets.

\section{Acknowledgement}
This work was supported by the UK Engineering and Physical Sciences Research Council (EPSRC) with Grant No. EP/M016145/1.

 
\bibliographystyle{IEEEtran}
\bibliography{WCSP_Invited_Paper}

\end{document}